\titleformat*{\section}{\centering}
\newtheorem{thm}{Theorem}
\newtheorem{assmp}{Assumption}
\newtheorem{example}{Example}
\newtheorem{remark}{Remark}
\newenvironment{proof}{\hspace{0ex}\textsc{Proof}.\hspace{1ex}}{\hfill$\Box$\newline}
\DeclareMathOperator{\dd}{\mathrm{d\!}}
\DeclareMathOperator{\BE}{\mathbf{E}}
\DeclareMathOperator{\BP}{\mathbf{P}}
\DeclareMathOperator{\R}{\mathbb{R}}
\DeclareMathOperator{\setq}{\mathcal{Q}}
 \DeclareMathOperator{\setqt}{\widetilde{\setq}}
\DeclareMathOperator{\setg}{\mathcal{G}}
\def\inversew{\nu}
\begin{document}
\title{\textbf{A NOTE ON THE QUANTILE FORMULATION}}
\author{\textsc{Zuo Quan Xu}\footnote{Department of Applied Mathematics, The Hong Kong Polytechnic University, Hong Kong. Email: \url{maxu@polyu.edu.hk}. The author acknowledges financial supports from Hong Kong General Research Fund (No. 529711), Hong Kong Early Career Scheme (No. 533112), and The Hong Kong Polytechnic University. } \\[10pt] \textit{The Hong Kong Polytechnic University}}
\date{07 April 2014}
\maketitle
\begin{abstract}
Many investment models in discrete or continuous-time settings boil down to maximizing an objective of the quantile function of the decision variable. This quantile optimization problem is known as the quantile formulation of the original investment problem. Under certain monotonicity assumptions, several schemes to solve such quantile optimization problems have been proposed in the literature. In this paper, we propose a change-of-variable and relaxation method to solve the quantile optimization problems without using the calculus of variations or making any monotonicity assumptions. The method is demonstrated through a portfolio choice problem under rank-dependent utility theory (RDUT). We show that this problem is equivalent to a classical Merton's portfolio choice problem under expected utility theory with the same utility function but a different pricing kernel explicitly determined by the given pricing kernel and probability weighting function. With this result, the feasibility, well-posedness, attainability and uniqueness issues for the portfolio choice problem under RDUT are solved. It is also shown that solving functional optimization problems may reduce to solving probabilistic optimization problems. The method is applicable to general models with law-invariant preference measures including portfolio choice models under cumulative prospect theory (CPT) or RDUT, Yaari's dual model, Lopes' SP/A model, and optimal stopping models under CPT or RDUT.
\\[1pt]
\par
{\textsc{Key Words}:} Portfolio choice/selection, behavioral finance, law-invariant, quantile formulation, probability weighting/distortion function, change of variable, relaxation method, calculus of variations, CPT, RDUT, time consistency, atomic, atomless/non-atomic, functional optimization problem.
\end{abstract}

\section{INTRODUCTION}
\noindent
Classical expected utility theory (EUT) as a model of choice under uncertainty fails to explain a number of paradoxes. Among the alternative models proposed, Kahneman and Tversky's (1979, 1992) cumulative prospect theory (CPT) provides one of the best explanations of these paradoxes. This theory consists of three components: an $S$-shaped utility function\footnote{A function is called $S$-shaped if it is convex on the left and concave on the right; and reverse $S$-shaped if concave on the left and convex on the right.}, a reference point, and probability weighting/distortion functions. The last two are missing in EUT. In light of these theoretical developments, it is natural to consider investment problems that involve probability weighting functions. However, the probability weighting functions make these problems time-inconsistent so that these problems cannot be studied using only classical dynamic programming or probabilistic approaches.
\par
Jin and Zhou (2008) initiated the study of portfolio choice problems under CPT with probability weighting functions in continuous-time settings. They solved the problem by assuming the monotonicity of a function related to the pricing kernel and probability weighting function. However, this assumption is so restrictive that it excludes most probability weighting functions that are typically used, including that proposed by Tversky and Kahneman (1992), in the Black-Scholes market setting. Jin, Zhang, and Zhou (2011) considered the same portfolio choice problem under the scenario of a loss constraint with the same assumption. He and Zhou (2011) investigated general models with law-invariant preference measures, including the classical Merton's portfolio choice model under EUT, the mean-variance model, the goal reaching model, the Yaari's dual model, the Lopes' SP/A model, the behavioral model under CPT, and those explicitly involving VaR and CVaR in their objectives and/or constraints. Their work took a step forward and reduced the monotonicity assumption in Jin and Zhou (2008) to a piece-wise monotonicity assumption. The results cover the probability weighting functions proposed by Tversky and Kahneman (1992), Tversky and Fox (1995), and Prelec (1998). Xu and Zhou (2013) initiated the study of continuous-time optimal stopping problem under CPT and solved the problem under the same assumption of piece-wise monotonicity as He and Zhou (2011). By adopting the calculus of variations, Xia and Zhou (2012) achieved a breakthrough. They proposed and solved a portfolio choice problem under rank-dependent utility theory (RDUT) with no monotonicity assumptions. Their method also works for general models with law-invariant preference measures. However, they use techniques from the calculus of variations and have extensive recourse to convex analysis, so their arguments are lengthy, technical, and difficult to follow.
 \par
 In this paper, without making any monotonicity assumptions, we propose a new and easy-to-follow method to study the portfolio choice problem under RDUT. A complete and compact argument replaces the lengthy calculus of variations argument in Xia and Zhou (2012). The main idea is as follows. After transforming the portfolio choice problem into its quantile formulation, we make a change of variable to remove the probability weighting function from the objective and reveal the essence of the problem. In the literature, the optimal solution is commonly obtained by point-wise maximizing the Lagrangian in the objective. However, such a solution may not be a quantile function. Our idea is to replace a part of the Lagrangian to relax the problem so that the new problem can be solved by point-wise maximizing the new Lagrangian, and then to show that there is no gap between the old and new Lagrangians in this point-wise solution. Through this approach, we show that solving a portfolio choice problem under RDUT reduces to solving a classical Merton's portfolio choice problem under EUT with the same utility function but a different pricing kernel, which is determined by the given pricing kernel and probability weighting function. Moreover, the quantile optimization problem is avoided in the latter. As with Xia and Zhou (2012), the method is applicable to general models with law-invariant preference measures.
 \par
 In the literature, there is no study on feasibility, well-posedness, attainability and uniqueness issues for the portfolio choice problem under RDUT\footnote{see, e.g., Jin, Xu and Zhou (2008) for the definitions of feasibility, well-posedness, attainability and uniqueness issues for a portfolio choice problem}. We investigate these issues by linking the portfolio choice problem under RDUT to a classical Merton's portfolio choice problem under EUT for which the issues have been completely solved in Jin, Xu and Zhou (2008).
 \par
The remainder of this paper is organized as follows. In Section 2, we formulate a portfolio choice problem under RDUT and define its quantile formulation. In Section 3, we introduce a key step --- making a change of variable --- to formulate an equivalent quantile optimization problem, in which the probability weighting function is removed from the objective. The problem is then completely solved by a new relaxation method in Section 4. In Section 5, we demonstrate how to transform the portfolio choice problem under RDUT into an equivalent classical Merton's portfolio choice problem under EUT. The feasibility, well-posedness, attainability and uniqueness issues for the portfolio choice problem under RDUT are also investigated in this section. We conclude the paper in Section 6.

\section{\textsc{PROBLEM FORMULATION}}
\noindent
Using martingale representation theory (see, e.g., Pliska (1986), Karatzas, Lehoczky, and Shreve (1987), Cox and Huang (1989, 1991)),
the dynamic portfolio choice problem under RDUT in a complete market setting\footnote{See, e.g., Xia and Zhou (2012).} reduces to finding a random outcome $X$ to
\begin{align}\label{objective}
\sup_X \quad & \int_0^{\infty} u(x)\dd\:\big(1-w(1-F_X(x))\big),\\
\nonumber\textrm{subject to}\quad & \BE[ {\rho} X]=x_0,\quad X\geqslant 0,
\end{align}
where $F_X(\cdot)$ is the probability distribution function of $X$; $w(\cdot)$ is the probability weighting function which is differentiable and strictly increasing on $[0,1]$ with $w(0)=0$ and $w(1)=1$; $u(\cdot)$ is the utility function which is strictly increasing and second order differentiable on $\R^+$ with $u''(\cdot)<0$; and $\rho>0$ is the pricing kernel, also called the stochastic discount factor or state pricing density. We always have that $\BE[\rho]<+\infty$.
\par
If $w(\cdot)$ is the identity function, i.e., $w(x)=x$ for all $x\in[0,1]$, then
\begin{align*}
 \int_0^{\infty} u(x)\dd\:\big(1-w(1-F_X(x))\big)=\int_0^{\infty} u(x)\dd F_X(x)=\BE[u(X)],
\end{align*}
for any $X\geqslant 0$, and consequently, problem \eqref{objective} reduces to a classical Merton's portfolio choice problem under EUT:
\begin{align*}
\sup_X \quad & \BE[u(X)],\\
\textrm{subject to}\quad & \BE[ {\rho} X]=x_0,\quad X\geqslant 0.
\end{align*}
\par
To tackle problem \eqref{objective}, in the literature (see, e.g., Jin and Zhou (2008), Jin, Zhang, and Zhou (2011), He and Zhou (2011, 2012), Xia and Zhou (2012)), it is always assumed that
\begin{assmp}\label{assmp:atomless}
The pricing kernel is atomless\footnote{A random variable is called atomless or non-atomic if its cumulative distribution function is continuous, and called atomic otherwise.}.
\end{assmp}
Under this assumption, solving problem \eqref{objective} then reduces to solving a quantile\footnote{The quantile function $Q(\cdot)$ of a real-valued random variable is defined as the right-continuous inverse function of its cumulative distribution function $F(\cdot)$, that is $Q(x)=\sup\{t\in\R: F(t)\leqslant x\}$, for all $x\in (0,1)$, with convention $\sup\emptyset=-\infty$. A real-valued random variable is atomless if and only if its quantile function is strictly increasing.} optimization problem
\begin{align} \label{objective0}
\sup\limits_{G(\cdot)\in\setg_{x_0}} \int_{0}^{1}u(G(x))w'(1-x) \dd x,
\end{align}
where the set $\setg_{x_0}$ is given by
\begin{align*}
\setg_{x_0}&:=\left\{G(\cdot)\in\setg :\int_0^1 G(x)F^{-1}_{\rho}(1-x)\dd x=x_0\right\},
\end{align*}
 the set $\setg$ denotes the set of all quantile functions:
\begin{align*}
\setg&:=\left\{ G(\cdot): (0,1)\mapsto\R^+, \text{ increasing and right-continuous with left limits (RCLL)} \right\},
\end{align*}
and $F^{-1}_{\rho}(\cdot) \in\setg$ denotes the quantile function of the pricing kernel $\rho$. By Assumption \ref{assmp:atomless}, $\rho$ is atomless, so $F^{-1}_{\rho}(\cdot)$ is strictly increasing.
\par
Problem \eqref{objective} and problem \eqref{objective0} are linked as follows. The optimal solution $X^*$ to problem \eqref{objective} and the optimal solution $G^*(\cdot)$ to problem \eqref{objective0} satisfy
\begin{align}\label{xstar0}
 X^*=G^*(1-F_{\rho}(\rho)).
\end{align}
For this reason, problem \eqref{objective0} is called the quantile formulation of problem \eqref{objective}.
\par
Before Xia and Zhou (2012), problem \eqref{objective0} was partially solved under certain monotonicity assumptions in the literature.
Xia and Zhou (2012) used the calculus of variations to tackle it without making those monotonicity assumptions, but their arguments are lengthy and complex. Moreover, they did not study the feasibility, well-posedness, attainability or uniqueness issues for problem \eqref{objective}.
\par
In this paper, we propose a simple change-of-variable and relaxation method to tackle problem \eqref{objective0} without making any assumptions. We also solve the feasibility, well-posedness, attainability and uniqueness issues for problem \eqref{objective} by linking the problem to a classical Merton's portfolio choice problem under EUT.
\begin{remark}
In the literature, $\setg_{x_0}$ is often replaced by
\begin{align*}
\overline{\setg}_{x_0}&:=\left\{G(\cdot)\in\setg :\int_0^1 G(x)F^{-1}_{\rho}(1-x)\dd x\leqslant x_0\right\}.
\end{align*}
However, there is no difference between considering problem \eqref{objective0} for $\setg_{x_0}$ or $\overline{\setg}_{x_0}$ because the optimal solution to problem \eqref{objective0} in $\overline{\setg}_{x_0}$, if it exists, must belong to $\setg_{x_0}$.
\end{remark}
\begin{remark}
Here we assume that the pricing kernel is atomless as according to convention. However, if one studies economic equilibrium models with law-invariant preference measures (see, e.g., Xia and Zhou (2012)), the pricing kernel will be a part of the solution, so one cannot make a priori any assumption on it. The quantile formulation problem with an atomic pricing kernel is solved in Xu (2014).
\end{remark}

\section{\textsc{CHANGE OF VARIABLE}}
 \noindent
 To tackle problem \eqref{objective0}, our first main idea in this paper is to make a change of variable to remove the probability weighting function from the objective.
 \par
Let $\inversew: [0,1]\mapsto [0,1]$ be the inverse mapping of $x\mapsto 1-w(1-x)$, that is
\[\inversew(x):=1-w^{-1}(1-x), \quad x\in[0,1].\]
Then $\inversew(\cdot)$ is also a probability weighting function that is differentiable and strictly increasing on $[0,1]$. It follows that
\begin{multline*}
\int_{0}^{1}u(G(x))w'(1-x) \dd x=\int_{0}^{1}u(G(x))\dd\; (1-w(1-x))\\
=\int_{0}^{1}u(G(x))\dd\; (\inversew^{-1}(x))=\int_{0}^{1}u(G(\inversew(x)))\dd x=\int_{0}^{1}u(Q(x))\dd x,
\end{multline*}
where \[ Q(x)=G(\inversew(x)),\quad x\in(0,1).\] Note that
\begin{multline*}
\setg_{x_0}=\left\{G(\cdot)\in\setg:\int_0^1 G(x)F^{-1}_{\rho}(1-x)\dd x=x_0 \right\}\\
=\left\{G(\cdot)\in\setg :\int_0^1 G(\inversew(x))F^{-1}_{\rho}(1-\inversew(x))\inversew'(x)\dd x=x_0 \right\}.
\end{multline*}
Therefore, we conclude that $G(\cdot)\in\setg_{x_0}$ if and only if $Q(\cdot)\in\setq$, where
\begin{multline*}
\setq:=\left\{Q(\cdot): (0,1)\mapsto\R^+,\text{ increasing and RCLL with }\int_0^1 Q(x)\varphi'(x)\dd x=x_0\right\}\\
=\left\{Q(\cdot)\in \setg:\int_0^1 Q(x)\varphi'(x)\dd x=x_0\right\},
\end{multline*}
and
\begin{multline}\label{defi:variphi}
\varphi(x):=-\int_x^1 F^{-1}_{\rho}(1-\inversew(y))\inversew'(y)\dd y=-\int_{\inversew(x)}^1F^{-1}_{\rho}(1-y)\dd y \\
=-\int_0^{1-\inversew(x)}F^{-1}_{\rho}(y)\dd y=-\int_{0}^{w^{-1}(1-x)}F^{-1}_{\rho}(y)\dd y , \quad x\in[0,1].
\end{multline}
Note that $\varphi(\cdot)$ is a differentiable and strictly increasing function on $[0,1]$ with $\varphi(0)=-\BE[\rho]$ and $\varphi(1)=0$.
\par
By making this change of variable, problem \eqref{objective0} has now been transformed into an equivalent problem:
 \begin{align} \label{objective0'}
\sup\limits_{Q(\cdot)\in\setq} \int_{0}^{1}u(Q(x)) \dd x,
\end{align}
in which the probability weighting function does not appear in the objective. From now on, we focus on this problem.
\par
We point out here that although the objective of problem \eqref{objective0'} does not involve the probability weighting function, the constraint set $\setq$ does. So problem \eqref{objective0'} is different from the special scenario of problem \eqref{objective0}, in which $w(\cdot)$ is replaced by the identity function. We will study their relationship in Section 5.
 \par
This change in the formulation of problem \eqref{objective0} is mathematically simple, but reveals the essence of the problem. In problem \eqref{objective0'}, the function $\varphi(\cdot)$, rather than the probability weighting function and the quantile function of the pricing kernel, plays a key role; whereas, in problem \eqref{objective0}, the probability weighting function and the quantile function of the pricing kernel play separate roles in the objective and the constraint. Because the probability weighting function does not appear in the objective of problem \eqref{objective0'}, we can solve it by a new relaxation approach. Moreover, this also suggests that it may be possible to link problem \eqref{objective0'} to a problem under EUT. This will be investigated after solving it.
\par
We also point out here that the new formulation explains why the function $\varphi'(\cdot)$ plays such an important role in many existing models, such as those introduced by Jin and Zhou (2008), He and Zhou (2011), and Xia and Zhou (2012). In those works, the mysterious function $\varphi'(\cdot)$ is derived after lengthy analysis, and an explanation of why it should appear and play the key role is never provided.
\par
In tackling problem \eqref{objective0}, some studies assume $\varphi(\cdot)$ to satisfy various properties which are not generally true in practice, and under these assumptions, the problem is partially solved. Here are some examples.
\begin{example}
In Jin and Zhou (2008), the function $ \frac{F^{-1}_{\rho}(\cdot)}{w'(\cdot)}$ is assumed to be increasing in Assumption 4.1. This is equivalent to $\varphi'(\cdot)$ being decreasing, i.e., $\varphi(\cdot)$ is a concave function. In fact, we have \[1-w(1-\inversew(x))=x,\quad x\in[0,1],\] so \[\inversew'(x)=\frac{1}{w'(1-\inversew(x))}, \quad x\in[0,1].\] And consequently, by \eqref{defi:variphi},
\begin{align}\label{phi-rho-w}
\varphi'(x)=F^{-1}_{\rho}(1-\inversew(x))\inversew'(x)=\frac{F^{-1}_{\rho}(1-\inversew(x))}{w'(1-\inversew(x))},\quad x\in[0,1].
\end{align}
The equivalence follows immediately as $\inversew(\cdot)$ is increasing.
\end{example}
\begin{example}
In He and Zhou (2011), the function $\frac{w'(1-\cdot )}{F^{-1}_{\rho}(1-\cdot)}$ is assumed to be first strictly increasing and then strictly decreasing in Assumption 3.5 and many of the following results. By \eqref{phi-rho-w}, this is equivalent to $\varphi'(\cdot)$ being first strictly decreasing and then strictly increasing, i.e., $\varphi(\cdot)$ is a strictly reverse $S$-shaped function.
\end{example}
\begin{example}
In He and Zhou (2012), the function $\frac{w'(1-\cdot )}{F^{-1}_{\rho}(1-\cdot)}$ is assumed to be nondecreasing in Theorem 2, which is equivalent to $\varphi'(\cdot)$ being decreasing, i.e., $\varphi(\cdot)$ is a concave function. In Proposition 4-7, Theorem 4-6, and Corollary 1, the same function $\frac{w'(1-\cdot )}{F^{-1}_{\rho}(1-\cdot)}$ is assumed to be first strictly decreasing and then strictly increasing. This is equivalent to $\varphi'(\cdot)$ being first strictly increasing and then strictly decreasing, i.e., $\varphi(\cdot)$ is a strictly $S$-shaped function.
\end{example}

\section{\textsc{A NEW RELAXATION APPROACH}}
\noindent
Our second main idea in this paper is to introduce a simple relaxation method to tackle problem \eqref{objective0'}.
\par
The objective of problem \eqref{objective0'} is concave with respect to the decision quantiles, so we can apply the Lagrange multiplier method.
Problem \eqref{objective0'} is equivalent to problem
\begin{align} \label{objective1}
\sup\limits_{Q(\cdot) \in\setg}J(Q (\cdot))&=\int_{0}^{1}\Big(u(Q(x))-\lambda Q(x)\varphi'(x)\Big)\dd x,
\end{align}
for some Lagrange multiplier $\lambda> 0$ in the sense that they admit the same optimal solution.
\par
A naive approach to tackling the foregoing problem \eqref{objective1} is to point-wise maximize its Lagrangian (the integrand in \eqref{objective1}) to get a point-wise solution
\[Q_0(x):=\arg\max\Big\{y:u(y)-\lambda y\varphi'(x)\Big\}=(u')^{-1}(\lambda\varphi'(x)),\quad x\in(0,1).\]
However, this point-wise solution may not be a quantile function in $\setg$. In fact, $Q_0(\cdot)$ is a quantile function if and only if it is increasing, that is equivalent to $\varphi(\cdot)$ being concave. This is exactly what has been assumed in Jin and Zhou (2008) so as to solve the problem.
\par
The novel idea in this paper is to replace $\varphi(\cdot)$ by some function $\delta(\cdot)$ in the Lagrangian of problem \eqref{objective1} so that:
\begin{enumerate}[(i)]
	 \item The new cost function gives an upper bound to that in \eqref{objective1};
 \item The new problem can be solved by point-wise maximizing the new Lagrangian; and
 \item There is no gap between the new and old cost functions in the point-wise solution.
\end{enumerate}
This approach allows us to solve the problem completely without making any assumptions on the function $\varphi(\cdot)$.
\par
We first need to find a relaxed cost function. To this end, let $\delta(\cdot)$ be an absolutely continuous function such that
\begin{align}\label{deltarequirement}
 \int_{0}^{1}\Big(u(Q(x))-\lambda Q(x)\varphi'(x)\Big)\dd x\leqslant \int_{0}^{1}\Big(u(Q(x))-\lambda Q(x)\delta'(x)\Big)\dd x,
\end{align}
for every $Q(\cdot) \in\setg$. Setting $\delta(0)=\varphi(0)$ and $\delta(1)=\varphi(1)$ and applying Fubini's theorem, the inequality \eqref{deltarequirement} is equivalent to
\begin{align}\label{determinedelta1}
\int_0^1\Big(\varphi(x)-\delta(x)\Big)\dd Q(x)\leqslant 0,
\end{align}
for every $Q(\cdot) \in\setg$, which is clearly equivalent to $\delta(\cdot)$ dominating $\varphi(\cdot)$ on $[0,1]$.
\par
In this case, we have
\begin{multline}\label{keyineq}
 \int_{0}^{1}\Big(u(Q(x))-\lambda Q(x)\varphi'(x)\Big)\dd x\leqslant \int_{0}^{1}\Big(u(Q(x))-\lambda Q(x) \delta' (x)\Big)\dd x\\
\leqslant \int_{0}^{1}\Big(u(\overline{Q}(x))-\lambda \overline{Q}(x) \delta' (x)\Big)\dd x,
\end{multline}
where the last inequality is obtained by point-wise maximizing the new Lagrangian:
\begin{align}\label{overlineQ}
\overline{Q}(x):=\arg\max\Big\{y:u(y)-\lambda y\delta'(x)\Big\}=(u')^{-1}(\lambda \delta' (x)),\quad x\in[0,1].
\end{align}
To make $ \overline{Q}(\cdot)$ a quantile function, we require $\delta(\cdot)$ to be concave.
\par
To make $\overline{Q}(\cdot)$ an optimal solution to problem \eqref{objective1}, it is sufficient, by \eqref{keyineq}, to have
\begin{align}\label{optimal:barQ2}
 \int_{0}^{1}\Big(u(\overline{Q}(x))-\lambda \overline{Q}(x)\varphi'(x)\Big)\dd x=\int_{0}^{1}\Big(u(\overline{Q}(x))-\lambda \overline{Q}(x) \delta' (x)\Big)\dd x,
\end{align}
or equivalently,
 \begin{align*}
 \int_{0}^{1} (u')^{-1}(\lambda \delta' (x))\Big(\varphi'(x)-\delta' (x)\Big)\dd x=0.
\end{align*}
Applying Fubini's theorem and using $\delta(0)=\varphi(0)$ and $\delta(1)=\varphi(1)$, the above identity is equivalent to
 \begin{multline} \label{optimal:barQ}
 \int_{0}^{1} (u')^{-1}(\lambda \delta' (x)) \Big( \varphi'(x)-\delta' (x)\Big)\dd x=\int_{0}^{1} \Big(\delta(x)-\varphi(x)\Big)\dd\; \Big( (u')^{-1}(\lambda \delta' (x))\Big)\\
=\lambda\int_{0}^{1} \Big(\delta(x)-\varphi(x)\Big)\frac{1}{u''\Big((u')^{-1}(\lambda \delta' (x))\Big)}\dd \delta' (x)=0.
\end{multline}
Since $\delta(\cdot)$ dominates $\varphi(\cdot)$ on $[0,1]$, $u''(\cdot)<0$, and $\delta(\cdot)$ is concave, by the last identity, $\delta'(\cdot)$ must be constant on any sub interval of $\{x\in[0,1]: \delta(x)>\varphi(x)\}$.
\par
Putting all of the requirements on $\delta(\cdot)$ obtained thus far together, we see that $\delta(\cdot)$ should
\begin{enumerate}[(i)]
	 \item dominate $\varphi(\cdot)$ on $[0,1]$ with $\delta(0)=\varphi(0)$ and $\delta(1)=\varphi(1)$;
 \item be concave on $[0,1]$; and
 \item be affine on $\{x\in[0,1]: \delta(x)>\varphi(x)\}$.
\end{enumerate}
Therefore, we conclude that $\delta(\cdot)$ must be the concave envelope of $\varphi(\cdot)$ on $[0,1]$:
\begin{align}\label{envelope}
\delta(x)=\sup\limits_{0\leqslant a\leqslant x\leqslant b\leqslant 1}\frac{(b-x)\varphi(a)+(x-a)\varphi(b)}{b-a},\quad x\in[0,1].
\end{align}
\par
On the other hand, if $\delta(\cdot)$ is the concave envelope of $\varphi(\cdot)$ on $[0,1]$, then \eqref{determinedelta1} and \eqref{optimal:barQ} hold true. This further implies, by \eqref{keyineq} and \eqref{optimal:barQ2}, that $\overline{Q}(\cdot)$ defined in \eqref{overlineQ} is an optimal solution to problem \eqref{objective1}.
\par
Putting all of the results obtained thus far together and noting that $u(\cdot)$ is strictly concave, we conclude that
\begin{thm}\label{maintheorem}
Problem \eqref{objective1} admits a unique optimal solution
\begin{align*}
 (u')^{-1}(\lambda \delta' (x)),\quad x\in(0,1),
 \end{align*}
 where $\delta(\cdot)$ defined in \eqref{envelope} is the concave envelope of $\varphi(\cdot)$ on $[0,1]$.
 \par
Problem \eqref{objective0'} admits an optimal solution if and only if
 \begin{align*}
 \int_{0}^{1} (u')^{-1}(\lambda \delta' (x))\varphi'(x)\dd x=x_0
\end{align*}
admits a solution $\lambda>0$, in which case
\begin{align*}
 (u')^{-1}(\lambda \delta' (x)), \quad x\in(0,1),
 \end{align*}
is the unique optimal solution to problem \eqref{objective0'}.
\end{thm}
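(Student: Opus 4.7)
The plan is to verify rigorously the Lagrangian argument sketched in the text preceding the theorem, which splits naturally into three steps: establish the properties of the concave envelope $\delta(\cdot)$; derive the sandwich inequality \eqref{keyineq} together with the equality case at $\overline{Q}(\cdot)$; and transfer the conclusion from problem \eqref{objective1} back to problem \eqref{objective0'} via Lagrange duality.

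First, I would fix $\lambda>0$ and define $\delta(\cdot)$ by \eqref{envelope}. The three properties needed are standard facts about concave envelopes of continuous functions on a compact interval: $\delta$ dominates $\varphi$, with $\delta(0)=\varphi(0)$ and $\delta(1)=\varphi(1)$ obtained by inserting the degenerate chords at the endpoints; $\delta$ is concave on $[0,1]$ as a pointwise supremum of affine functions of $x$; and $\delta$ is affine on each connected component of the open set $\{\delta>\varphi\}$, for otherwise one could strictly raise $\delta$ at an interior point by replacing the current chord by a longer one, contradicting the supremum. Because $\varphi$ is continuous on $[0,1]$, $\delta$ is absolutely continuous, and $\delta'$ exists almost everywhere and is non-increasing.

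Next, I would derive \eqref{keyineq} and the equality case at $\overline{Q}(\cdot)$. The first inequality in \eqref{keyineq} reduces, via Fubini (integration by parts), to \eqref{determinedelta1}, and holds because $\delta\geqslant\varphi$ and $\dd Q\geqslant 0$ for $Q\in\setg$. The second inequality is pointwise maximization of $y\mapsto u(y)-\lambda y\delta'(x)$, whose unique maximizer is $\overline{Q}(x)=(u')^{-1}(\lambda\delta'(x))$; because $\delta'$ is non-increasing and $(u')^{-1}$ is decreasing (from $u''<0$), $\overline{Q}(\cdot)$ is non-decreasing and thus lies in $\setg$. The essential step is the equality \eqref{optimal:barQ2}, which by the identity chain \eqref{optimal:barQ} amounts to showing $\int_0^1(\delta(x)-\varphi(x))\,\dd \delta'(x)=0$. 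This is where the affineness property pays off: on each component of $\{\delta>\varphi\}$, $\delta$ is affine, so $\delta'$ is locally constant and the signed measure $\dd\delta'$ puts no mass there; hence $\dd\delta'$ is supported on $\{\delta=\varphi\}$, on which the integrand vanishes. I expect this ``no gap'' step to be the main technical obstacle, in particular the careful justification of the integration-by-parts identity involving $\dd\delta'$, which may include atomic contributions at kinks of the envelope. Strict concavity of $u$ then forces the pointwise maximizer to be unique, so $\overline{Q}(\cdot)$ is the unique optimal solution to \eqref{objective1}.

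Finally, to transfer from \eqref{objective1} to \eqref{objective0'}, I would invoke the standard Lagrangian equivalence. If some $\lambda>0$ makes $\overline{Q}(\cdot)$ satisfy the budget $\int_0^1\overline{Q}(x)\varphi'(x)\,\dd x=x_0$, then $\overline{Q}(\cdot)\in\setq$, and since the penalty $\lambda\int_0^1 Q(x)\varphi'(x)\,\dd x$ equals the constant $\lambda x_0$ for every $Q(\cdot)\in\setq$, maximizing the Lagrangian over $\setg$ agrees with maximizing the original objective over $\setq$; thus $\overline{Q}(\cdot)$ solves \eqref{objective0'}, and uniqueness is inherited. Conversely, if \eqref{objective0'} admits an optimizer $Q^{*}(\cdot)$, a Lagrange duality argument (applicable because the objective is concave and the budget is linear in $Q$) produces a multiplier $\lambda>0$ such that $Q^{*}$ also maximizes the unconstrained Lagrangian \eqref{objective1}, forcing $Q^{*}=\overline{Q}$ by the uniqueness already shown; the displayed multiplier equation is then just the budget constraint $Q^{*}\in\setq$.
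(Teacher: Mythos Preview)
Your proposal is correct and follows essentially the same approach as the paper: the three steps you outline---properties (i)--(iii) of the concave envelope, the sandwich inequality \eqref{keyineq} together with the ``no-gap'' identity \eqref{optimal:barQ}, and the Lagrangian transfer from \eqref{objective1} to \eqref{objective0'}---are exactly the structure of the paper's argument, with the first two steps carried out in the discussion preceding the theorem and the third in the proof itself. Your treatment is slightly more explicit in a few places (for instance, noting that $\dd\delta'$ is supported on $\{\delta=\varphi\}$, and flagging the need for a Lagrange-duality justification of the existence of a multiplier in the converse direction, which the paper simply asserts), but there is no substantive difference in method.
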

\begin{proof}
The foregoing argument shows that \[ (u')^{-1}(\lambda \delta' (x)),\quad x\in(0,1),\] is an optimal solution to problem \eqref{objective1}. Since $u(\cdot)$ is strictly concave, the optimal solution is unique.
\par
Suppose problem \eqref{objective0'} admits an optimal solution. Then the solution must be an optimal solution to problem \eqref{objective1} for some $\lambda>0$, so it must be of the form \[ (u')^{-1}(\lambda \delta' (x)),\quad x\in(0,1).\] This should be a feasible solution to problem \eqref{objective0'}, so \[\int_{0}^{1} (u')^{-1}(\lambda \delta' (x))\varphi'(x)\dd x=x_0.\]
\par
On the other hand, suppose that \[ \int_{0}^{1} (u')^{-1}(\lambda \delta' (x))\varphi'(x)\dd x=x_0\] holds true for some $\lambda>0$. Note that
\[\int_{0}^{1} Q(x)\varphi'(x)\dd x=x_0\] for all $Q(\cdot) \in\setq$, so
\begin{multline*}
\sup\limits_{Q(\cdot) \in\setq} \int_{0}^{1}u(Q(x)) \dd x=
\sup\limits_{Q(\cdot) \in\setq} \int_{0}^{1}\Big(u(Q(x))-\lambda Q(x)\varphi'(x)\Big)\dd x+\lambda x_0\\
\leqslant
\sup\limits_{Q(\cdot) \in\setg}\int_{0}^{1}\Big(u(Q(x))-\lambda Q(x)\varphi'(x)\Big)\dd x+\lambda x_0,
\end{multline*}
where the last inequality is due to $\setq\subseteq \setg$.
The optimization problem on the right-hand side is nothing but problem \eqref{objective1}, so the unique solution is\[ (u')^{-1}(\lambda \delta' (x)),\quad x\in(0,1).\] This solution belongs to $\setq$ as $ \int_{0}^{1} (u')^{-1}(\lambda \delta' (x))\varphi'(x)\dd x=x_0$, so it is a feasible solution to the problem on the left-hand side, and consequently, it is an optimal solution to problem \eqref{objective0'}. Since $u(\cdot)$ is strictly concave, the optimal solution to problem \eqref{objective0'} is unique. The proof is complete.
\end{proof}
\par
By Theorem \ref{maintheorem}, the optimal solution to problem \eqref{objective0} is given by
$$G^*(x)=(u')^{-1}(\lambda \delta' (\inversew^{-1}(x)))=(u')^{-1}(\lambda \delta' ( 1-w(1-x))),\quad x\in(0,1),$$
which is the same as the last identity on page 14 in Xia and Zhou (2012). That is, our approach yields the same result as in Xia and Zhou (2012). It is clear that our change-of-variable and relaxation approach is much simpler and neater than the calculus of variations approach in Xia and Zhou (2012), which has extensive recourse to convex analysis. If $\varphi(\cdot)$ is assumed to take special shape, such as reverse $S$-shaped function in He and Zhou (2011), $S$-shaped function in He and Zhou (2012), then we can get explicit expression for $\delta(\cdot)$, and consequently, $G^*(\cdot)$ reduces to the results obtained in those works.

\par
The feasibility, well-posedness, attainability and uniqueness issues for problem \eqref{objective} are very important and hard to answer.
To avoid these issues, various assumptions are used in the literature to ensure the existence and uniqueness of solutions (see, e.g., Jin and Zhou (2008), Jin, Zhang, and Zhou (2011), He and Zhou (2011, 2012)). In the following section, with Theorem \ref{maintheorem}, we will link problem \eqref{objective} to a classical Merton's portfolio choice problem under EUT, for which the feasibility, well-posedness, attainability and uniqueness issues are studied in Jin, Xu, and Zhou (2008). This connection also develops a new way to solve problem \eqref{objective}, which avoids dealing with the quantile formulation problem \eqref{objective0}.

\section{\textsc{A LINK BETWEEN MODELS UNDER RDUT AND EUT}}
\noindent
By Theorem \ref{maintheorem}, it is clear that a quantile function is an optimal solution to problem \eqref{objective0'} if and only if it is an optimal solution to problem
 \begin{align} \label{objective2}
\sup\limits_{Q(\cdot)\in \setqt } \int_{0}^{1}u(Q(x)) \dd x,
\end{align}
where
\begin{align*}
\setqt:=\left\{Q(\cdot)\in \setg:\int_0^1 Q(x)\delta'(x)\dd x=x_0\right\}.
\end{align*}
\par
Since $\delta'(\cdot)$ is decreasing, function
\[F_{\widetilde{\rho}}^{-1}(x):=\delta'(1-x),\quad x\in(0,1),\]
belongs to $\setg$ and can be regarded as the quantile function of some positive random variable $\widetilde{\rho}$. It is possible to choose $\widetilde{\rho}$ to be comonotonic\footnote{Two random variables $X$ and $Y$ are said to be comonotonic if $(X(\omega')-X(\omega)) (Y(\omega')-Y(\omega))\geqslant 0$ almost surely under $\BP\otimes\BP$.} with $\rho$, which is henceforth assumed.\footnote{In fact, $\widetilde{\rho}=\delta'(1-F_{\rho}(\rho))$ in the current setting. Xu (2014) proved that $\widetilde{\rho}$ can be chosen to be comonotonic with $\rho$ even if $\rho$ is not atomless.} Then
 \begin{align*}
\setqt=&\left\{Q(\cdot)\in \setg:\int_0^1 Q(x)\delta'(x)\dd x=x_0\right\}\\
=&\left\{Q(\cdot)\in \setg:\int_0^1 Q(x)F_{\widetilde{\rho}}^{-1}(1-x) \dd x=x_0\right\}.
\end{align*}
\par
Now, we see that problem \eqref{objective2} can be regarded as a special case of problem \eqref{objective0}, in which the probability weighting function $w(\cdot)$ is replaced by the identity function and the pricing kernel $\rho$ is replaced by $\widetilde{\rho}$.
\par
We point out here that the new pricing kernel $\widetilde{\rho}$ may be atomic, which does not satisfy Assumption \ref{assmp:atomless}. In fact, $ \widetilde{\rho}$ is atomless if and only if its quantile function $F_{\widetilde{\rho}}^{-1}(\cdot)$ is strictly increasing. This is equivalent to $\delta(\cdot)$ being strictly concave as $F_{\widetilde{\rho}}^{-1}(\cdot)=\delta'(1-\cdot)$, and also equivalent to $\varphi(\cdot)$ being strictly concave as $\delta(\cdot)$ is the concave envelope of $\varphi(\cdot)$.
\par
Recalling the relationship between problem \eqref{objective} and problem \eqref{objective0}, it is natural to link problem \eqref{objective2} to a portfolio choice problem
\begin{align*}
\sup_X \quad & \int_0^{\infty} u(x) \dd F_X(x),\\
\nonumber\textrm{subject to}\quad & \BE[ \widetilde{\rho} X]=x_0,\quad X\geqslant 0.
\end{align*}
Note that \[\int_0^{\infty} u(x) \dd F_X(x)=\BE[u(X)],\] for any $X\geqslant 0$, so the above problem is the same as problem
\begin{align} \label{equivalentEUT}
\sup_X \quad & \BE[u(X)],\\
\nonumber\textrm{subject to}\quad & \BE[ \widetilde{\rho} X]=x_0,\quad X\geqslant 0.
\end{align}
This is a classical Merton's portfolio choice problem under EUT.
\par
Under the assumption that $\rho$ is atomless, we have linked problem \eqref{objective0} to problem \eqref{objective}. However, we cannot directly link problem \eqref{objective2} to problem \eqref{equivalentEUT} as before, because the new pricing kernel $ \widetilde{\rho}$ in problem \eqref{equivalentEUT} may not be atomless.
\par
 The following result from Xu (2014), where no assumption on $\widetilde{\rho}$ is required, links problem \eqref{objective2} to problem \eqref{equivalentEUT}.
\begin{thm}\label{noatem}
If $\widetilde{X}^*$ is an optimal solution to problem \eqref{equivalentEUT}, then its quantile function is an optimal solution to problem \eqref{objective2}.
\par
On the other hand, if $\widetilde{Q}^*(\cdot)$ is an optimal solution to problem \eqref{objective2}, then \[\widetilde{X}^*:=\widetilde{Q}^*(1-U)\] is an optimal solution to problem \eqref{equivalentEUT}, where $U$ is any random variable uniformly distributed on the unit interval $(0,1)$ and comonotonic with $\widetilde{\rho}$.
\end{thm}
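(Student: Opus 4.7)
The plan is to reduce both directions of the theorem to the Hardy--Littlewood rearrangement inequality, with the comonotonic coupling of $U$ and $\widetilde{\rho}$ supplying the bridge between random variables and their quantile functions. The structural observation I would exploit throughout is that, because $U$ is uniformly distributed on $(0,1)$ and comonotonic with $\widetilde{\rho}$, we have $\widetilde{\rho} = F_{\widetilde{\rho}}^{-1}(U)$ almost surely. Consequently, for any $Q(\cdot)\in\setg$, the random variable $Q(1-U)$ has quantile function $Q(\cdot)$, is counter-monotonic with $\widetilde{\rho}$, and satisfies
\[
\BE\bigl[\widetilde{\rho}\, Q(1-U)\bigr] = \int_{0}^{1} F_{\widetilde{\rho}}^{-1}(s)\, Q(1-s)\,\dd s = \int_{0}^{1} Q(s)\, F_{\widetilde{\rho}}^{-1}(1-s)\,\dd s,
\]
which realizes the Fr\'echet--Hoeffding lower bound of $\BE[\widetilde{\rho}X]$ among all nonnegative $X$ with quantile function $Q(\cdot)$.

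For the second statement I would simply set $\widetilde{X}^* := \widetilde{Q}^*(1-U)$; the identity above gives $\BE[\widetilde{\rho}\,\widetilde{X}^*]=x_0$, so $\widetilde{X}^*$ is feasible for \eqref{equivalentEUT}. For any competing feasible $X$ with quantile $Q_X(\cdot)$, rearrangement gives $\int_0^1 Q_X(s)F_{\widetilde{\rho}}^{-1}(1-s)\,\dd s \leqslant \BE[\widetilde{\rho}X]=x_0$, so padding $Q_X(\cdot)$ by the nonnegative constant $c:=\bigl(x_0 - \int_0^1 Q_X(s)F_{\widetilde{\rho}}^{-1}(1-s)\,\dd s\bigr)/\BE[\widetilde{\rho}]$ produces an element of $\setqt$. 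The optimality of $\widetilde{Q}^*(\cdot)$ together with the strict monotonicity of $u(\cdot)$ then yields
\[
\BE[u(\widetilde{X}^*)] = \int_0^1 u(\widetilde{Q}^*(s))\,\dd s \geqslant \int_0^1 u(Q_X(s)+c)\,\dd s \geqslant \BE[u(X)].
\]

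For the first statement, let $\widetilde{X}^*$ be optimal for \eqref{equivalentEUT} with quantile $Q^*(\cdot)$, and consider $Y^*:=Q^*(1-U)$. This $Y^*$ has the same distribution as $\widetilde{X}^*$ and satisfies $\BE[\widetilde{\rho}Y^*]\leqslant x_0$. If this inequality were strict, then adding a positive constant to $Y^*$ to restore the budget would yield a strictly better feasible solution to \eqref{equivalentEUT}, contradicting the optimality of $\widetilde{X}^*$. Hence $\int_0^1 Q^*(s)F_{\widetilde{\rho}}^{-1}(1-s)\,\dd s = x_0$, i.e.\ $Q^*(\cdot)\in\setqt$. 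For any $Q(\cdot)\in\setqt$, the variable $Q(1-U)$ is feasible for \eqref{equivalentEUT}, so $\int_0^1 u(Q(s))\,\dd s = \BE[u(Q(1-U))] \leqslant \BE[u(\widetilde{X}^*)] = \int_0^1 u(Q^*(s))\,\dd s$, proving optimality of $Q^*(\cdot)$ for \eqref{objective2}.

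The main subtlety is that $\widetilde{\rho}$ may be atomic, so the naive choice $U:=F_{\widetilde{\rho}}(\widetilde{\rho})$ does not produce a uniform random variable, and one cannot identify $U$ intrinsically from $\widetilde{\rho}$. I would therefore invoke the existence of a uniform $U$ comonotonic with $\widetilde{\rho}$, which is precisely the content of Xu (2014) and in general requires an enlargement of the underlying probability space. Once that coupling is in hand, everything reduces to routine applications of Hardy--Littlewood together with the budget-padding argument above.
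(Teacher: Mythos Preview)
The paper does not supply its own proof of this theorem; it is quoted from Xu (2014) and used as a black box. Your argument is correct and is essentially the standard route one would expect: Hardy--Littlewood rearrangement to compare $\BE[\widetilde{\rho}X]$ with $\int_0^1 Q_X(s)\,F_{\widetilde{\rho}}^{-1}(1-s)\,\dd s$, the comonotonic coupling $\widetilde{\rho}=F_{\widetilde{\rho}}^{-1}(U)$ to pass between random variables and quantiles, and a budget-padding trick to return from the inequality produced by rearrangement to the equality constraint defining $\setqt$. You have also correctly isolated the only genuine difficulty, namely the possible atomicity of $\widetilde{\rho}$, which forces one to \emph{postulate} the existence of the uniform comonotonic $U$ (via Xu (2014) or an enlargement of the probability space) rather than manufacture it as $F_{\widetilde{\rho}}(\widetilde{\rho})$. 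Since the theorem statement itself already presupposes such a $U$ in its second part, your use of it in proving the first part is legitimate as well.
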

With this result, we can link problem \eqref{equivalentEUT} to problem \eqref{objective}.
\begin{thm}
Let $\widetilde{X}^*$ be an optimal solution to problem \eqref{equivalentEUT} and $\widetilde{Q}^*(\cdot)$ be its quantile function. Then
\begin{align*}
 X^*:=\widetilde{Q}^*(1-w(F_{\rho}(\rho)))
\end{align*}
is an optimal solution to problem \eqref{objective}.
\par
On the other hand, if $X^*$ is an optimal solution to problem \eqref{objective},
then there exists a unique quantile function $\widetilde{Q}^*(\cdot)$ such that \[X^*=\widetilde{Q}^*(1-w(F_{\rho}(\rho))).\] Moreover, $\widetilde{Q}^*(1-U)$ is an optimal solution to problem \eqref{equivalentEUT}, where $U$ is any random variable uniformly distributed on the unit interval $(0,1)$ and comonotonic with $\widetilde{\rho}$.
\end{thm}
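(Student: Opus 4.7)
The plan is to chain together the equivalences established earlier in the paper: problem \eqref{equivalentEUT} is linked to problem \eqref{objective2} by Theorem \ref{noatem}; problem \eqref{objective2} shares the same quantile optima as problem \eqref{objective0'} by Theorem \ref{maintheorem}; problem \eqref{objective0'} is equivalent to problem \eqref{objective0} via the change of variable $Q(x)=G(\inversew(x))$ of Section 3; and problem \eqref{objective0} is the quantile formulation of problem \eqref{objective}, linked through the identity \eqref{xstar0}, namely $X^{*}=G^{*}(1-F_{\rho}(\rho))$. The key arithmetic lemma used throughout is the composition rule $\inversew^{-1}(x)=1-w(1-x)$, so that $\inversew^{-1}(1-F_{\rho}(\rho))=1-w(F_{\rho}(\rho))$.

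For the forward direction, I would start from an optimizer $\widetilde{X}^{*}$ of \eqref{equivalentEUT}, and apply Theorem \ref{noatem} to conclude that its quantile function $\widetilde{Q}^{*}(\cdot)$ is optimal for \eqref{objective2}. Theorem \ref{maintheorem} then yields that $\widetilde{Q}^{*}(\cdot)$ is the unique optimizer of \eqref{objective0'}. Undoing the change of variable, the function $G^{*}(x):=\widetilde{Q}^{*}(\inversew^{-1}(x))=\widetilde{Q}^{*}(1-w(1-x))$ lies in $\setg_{x_{0}}$ and is optimal for \eqref{objective0}. Finally, by \eqref{xstar0},
\[X^{*}=G^{*}(1-F_{\rho}(\rho))=\widetilde{Q}^{*}\bigl(1-w(1-(1-F_{\rho}(\rho)))\bigr)=\widetilde{Q}^{*}(1-w(F_{\rho}(\rho)))\]
is optimal for \eqref{objective}.

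For the converse, if $X^{*}$ is optimal for \eqref{objective}, its quantile function $G^{*}(\cdot)$ is optimal for \eqref{objective0}. Setting $\widetilde{Q}^{*}(x):=G^{*}(\inversew(x))$ produces an optimizer of \eqref{objective0'} and hence of \eqref{objective2}. Applying the second half of Theorem \ref{noatem} shows that $\widetilde{Q}^{*}(1-U)$ is optimal for \eqref{equivalentEUT} for any uniform $U$ comonotonic with $\widetilde{\rho}$. For the identity $X^{*}=\widetilde{Q}^{*}(1-w(F_{\rho}(\rho)))$ I would again substitute $\inversew^{-1}(1-F_{\rho}(\rho))=1-w(F_{\rho}(\rho))$ and invoke \eqref{xstar0}.

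Uniqueness of $\widetilde{Q}^{*}(\cdot)$ reduces to uniqueness of $G^{*}(\cdot)$: since $\rho$ is atomless and $w$ is continuous and strictly increasing, $1-F_{\rho}(\rho)$ is uniformly distributed on $(0,1)$, so the quantile function of $X^{*}$ is uniquely determined, forcing $G^{*}=F_{X^{*}}^{-1}$ and then $\widetilde{Q}^{*}(x)=G^{*}(\inversew(x))$. The main obstacle in writing the proof cleanly will be keeping the four changes of variable consistent and correctly aligning the two applications of Theorem \ref{noatem} with the appropriate comonotonicity conventions; once these bookkeeping issues are settled the result is essentially a diagram chase through the equivalences.
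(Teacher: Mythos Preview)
Your proposal is correct and follows essentially the same chain of equivalences as the paper's proof: Theorem~\ref{noatem} to pass between \eqref{equivalentEUT} and \eqref{objective2}, the identification of optima of \eqref{objective2} and \eqref{objective0'} (via Theorem~\ref{maintheorem}), the change of variable $Q=G\circ\inversew$ to reach \eqref{objective0}, and \eqref{xstar0} to reach \eqref{objective}, together with the composition identity $\inversew^{-1}(1-F_{\rho}(\rho))=1-w(F_{\rho}(\rho))$. The only difference is that you spell out the uniqueness of $\widetilde{Q}^{*}$ explicitly (using that $1-F_{\rho}(\rho)$ is uniform since $\rho$ is atomless), whereas the paper's proof leaves this implicit.
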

\begin{proof}
Suppose that $\widetilde{X}^*$ is an optimal solution to problem \eqref{equivalentEUT} and $\widetilde{Q}^*(\cdot)$ is its quantile function. By Theorem \ref{noatem}, $\widetilde{Q}^*(\cdot)$ is an optimal solution to problem \eqref{objective2} and problem \eqref{objective0'}. Consequently, \[G^*(x):=\widetilde{Q}^*(\inversew^{-1}(x)),\quad x\in(0,1),\] is an optimal solution to problem \eqref{objective0}. Hence, by \eqref{xstar0},
\begin{align*}
 X^*=G^*(1-F_{\rho}(\rho))=\widetilde{Q}^*(\inversew^{-1}(1-F_{\rho}(\rho)))=\widetilde{Q}^*(1-w(F_{\rho}(\rho)))
\end{align*}
is an optimal solution to problem \eqref{objective}.
\par
On the other hand, if $X^*$ is an optimal solution to problem \eqref{objective}. Then by \eqref{xstar0}, \[X^*=G^*(1-F_{\rho}(\rho)),\] where $G^*(\cdot)$ is an optimal solution to problem \eqref{objective0}. Consequently, \[\widetilde{Q}^*( x):=G^*(\inversew(x)),\quad x\in(0,1),\] is an optimal solution to problem \eqref{objective0'} and problem \eqref{objective2}. By Theorem \ref{noatem}, $\widetilde{Q}^*(1-U)$ is an optimal solution to problem \eqref{equivalentEUT}. The proof is complete.
\end{proof}
\par
The above result shows that solving the portfolio choice problem \eqref{objective} under RDUT is equivalent to solving problem \eqref{equivalentEUT} under EUT, which is much easier than the former. Moreover, the latter does not require solving a quantile optimization problem. This provides us a new way to solve the portfolio choice problem \eqref{objective}.
\par
In the literature, various conditions are assumed so as to avoid studying the feasibility, well-posedness, attainability or uniqueness issues for problem \eqref{objective} (see, e.g., Jin and Zhou (2008), Jin, Zhang, and Zhou (2011), He and Zhou (2011, 2012)).
By the above result, these issues for problem \eqref{objective} reduce to that for problem \eqref{equivalentEUT}. However, these issues for problem \eqref{equivalentEUT} are solved in Jin, Xu, and Zhou (2008), so are for problem \eqref{objective}. Similarly, these issues for problems \eqref{objective0}, \eqref{objective0'} and \eqref{objective2} are solved as well.
\begin{remark}
The optimal solution to problem \eqref{equivalentEUT} can be obtained by the Lagrange multiplier method directly. Consequently, its quantile function can be obtained without solving problem \eqref{objective2}. Such approach to solving an investment problem under RDUT without using quantile optimization technique has never appeared in the literature to the best of our knowledge.
\par
On the other hand, this result also tells us that a functional optimization problem \eqref{objective0} can be solved via solving a probabilistic optimization problem \eqref{equivalentEUT}. It is an important and challenging question whether we can apply this idea to other functional optimization problems.
\end{remark}
\begin{remark}
The new pricing kernel $\widetilde{\rho}$ does not depend on the utility function $u(\cdot)$.
\end{remark}
\begin{remark}
Problem \eqref{objective} is time-inconsistent, whereas problem \eqref{equivalentEUT} is time-consistent. It would be interesting to study their relationships as time changes.
\end{remark}

\section{\textsc{CONCLUDING REMARKS}}
\noindent
In this paper, we consider a portfolio choice problem under RDUT. We propose a short, neat, and easy-to-follow method to solve the problem. The method consists of two key ideas. The first is making a change of variable to reveal the key function that we need to consider in the quantile formulation problem. The second is relaxing the Lagrangian so as to find an achievable upper bound. Our approach can also be adopted to deal with portfolio choice and optimal stopping problems under CPT/RDUT as well as many other models with law-invariant preference measures.
\par
The second contribution of this paper is showing that solving a portfolio choice problem under RDUT is equivalent to solving a classical Merton's portfolio choice problem under EUT. The latter avoids studying the quantile optimization problem and can be solved by the classical dynamic programming and probabilistic approaches. Theorem \ref{noatem} obtained by Xu (2014) plays a key role in connecting these two problems as the new pricing kernel cannot be assumed to be atomless in general.
\par
The third contribution of this paper is solving the feasibility, well-posedness, attainability and uniqueness issues for the portfolio choice problem under RDUT.
 \par
Last but not least, we show that solving functional optimization problems may reduce to solving probabilistic optimization problems. This idea may be applicable to other functional optimization problems.
\par
\textsc{Acknowledgments.} The author is grateful to the editors and anonymous referees for carefully reading the manuscript and making useful suggestions that have led to a much improved version of the paper.



\end{document}